\pdfoutput=1
\documentclass[conference]{IEEEtran}
\IEEEoverridecommandlockouts
\usepackage{comment}
\includecomment{ISIT}
\excludecomment{full}

\usepackage{cite}
\usepackage{graphicx}
\usepackage{amsmath}
\interdisplaylinepenalty=2500
\usepackage{url}

\hyphenation{op-tical net-works semi-conduc-tor Com-pute-Mixed-Sin-gle}

\usepackage{amssymb,mathtools}
\usepackage{amsthm}
\usepackage{etoolbox} 
\usepackage{tabularx}
\usepackage{multirow}
\usepackage{booktabs}
\usepackage{braket}
\usepackage{capt-of}
\usepackage[ruled,vlined,linesnumbered]{algorithm2e}

\theoremstyle{definition}
\newtheorem{definition}{Definition}
\newtheorem{example}{Example}
\theoremstyle{plain}
\newtheorem{theorem}{Theorem}

\newtheorem{lemma}{Lemma}
\newtheorem{corollary}{Corollary}

\AfterEndEnvironment{definition}{\noindent\ignorespaces}
\AfterEndEnvironment{example}{\noindent\ignorespaces}
\AfterEndEnvironment{theorem}{\noindent\ignorespaces}
\AfterEndEnvironment{proposition}{\noindent\ignorespaces}
\AfterEndEnvironment{lemma}{\noindent\ignorespaces}
\AfterEndEnvironment{corollary}{\noindent\ignorespaces}
\AfterEndEnvironment{proof}{\noindent\ignorespaces}

\makeatletter
\renewenvironment{proof}[1][\proofname]{\par
  \pushQED{\qed}%
  \normalfont
  \topsep0pt \partopsep0pt 
  \trivlist
  \item[\hskip\labelsep\itshape#1\@addpunct{.}]\ignorespaces
  }{%
  \popQED\endtrivlist\@endpefalse
  \addvspace{6pt plus 6pt} 
  }
\makeatother

\usepackage{color}

\newcommand{\q}{\hspace{10pt}}

\newcommand{\Sc}{\mathcal{S}}
\newcommand{\Pc}{\mathcal{P}}

\newcommand{\sen}{[n]}

\newcommand{\R}{\mathbb{R}}
\newcommand{\Z}{\mathbb{Z}}

\renewcommand{\vec}[1]{\boldsymbol{#1}}
\renewcommand{\phi}{\varphi}

\newcommand{\KL}{D_{\mathrm{KL}}}
\newcommand{\MI}{\mathrm{MI}}
\newcommand{\RI}[4]{\mathrm{RI}_{#3 \to #4}(#1, #2)}

\newcommand{\defeq}{\coloneqq}

\begin{document}
%
\title{Information Decomposition on Structured Space\thanks{A preprint is available at http://arxiv.org/abs/1601.05533.}}

\author{\IEEEauthorblockN{Mahito Sugiyama}
\IEEEauthorblockA{ISIR, Osaka University\\
JST PRESTO\\
mahito@ar.sanken.osaka-u.ac.jp}
\and
\IEEEauthorblockN{Hiroyuki Nakahara}
\IEEEauthorblockA{RIKEN Brain Science Institute\\
hiro@brain.riken.jp}
\and
\IEEEauthorblockN{Koji Tsuda}
\IEEEauthorblockA{Graduate School of Frontier Sciences\\
The University of Tokyo\\
tsuda@k.u-tokyo.ac.jp}}



\maketitle

\begin{abstract}
We build information geometry for a partially ordered set of variables and define the  
orthogonal decomposition of information theoretic quantities.
The natural connection between information geometry and order theory leads to efficient decomposition algorithms.
This generalization of Amari's seminal work on hierarchical decomposition of probability distributions 
on event combinations enables us to analyze high-order statistical interactions
arising in neuroscience, biology, and machine learning.
 \end{abstract}


\begin{full}
\begin{IEEEkeywords}
Orthogonal decomposition, dually flat manifold, partially ordered set, Kullback--Leibler divergence, mutual information, entropy.
\end{IEEEkeywords}
\end{full}

%
\IEEEpeerreviewmaketitle

\section{Introduction}\label{sec:intro}
Let $e_1, e_2, \ldots, e_n$ denote the set of events. 
All combinations of events are regarded as a partially ordered set and form a complete hierarchy (Figure~\ref{figure:hierarchy}\textbf{a}).
Amari introduced the orthogonal decomposition of probability distributions defined on the complete hierarchy of events~\cite{Amari01}.
That method provided a theoretical foundation with which to analyze the higher-order
interactions in a wide variety of applications, such as firing
patterns of neurons~\cite{Nakahara02,Nakahara06}, gene
interactions~\cite{Nakahara03}, and word associations in
documents~\cite{Hou13}.
However, in many applications the hierarchy is often incomplete,
because some event combinations can never occur (Figure~\ref{figure:hierarchy}\textbf{b}).
For example, if $e_1$ indicates a person being male and $e_2$ indicates a person having ovarian cancer,
the combination of $e_1$ and $e_2$ can never occur.
Incomplete hierarchies can also result from a lack of data~\cite{Ganmor11}.

We define information geometric dual coordinates on a \emph{partially ordered set}, or a \emph{poset}. 
They lead to an efficient algorithm for decomposing Kullback--Leibler divergence and entropy in an incomplete hierarchy.
Our method can be used to isolate the contribution of each event combination and assess its
statistical significance~\cite{Nakahara02}. 
From a theoretical viewpoint, our method offers a previously unexplored link between order theory and information geometry.

The remainder of this paper is organized as follows.
 Section~\ref{sec:probPosets} introduces a dually flat manifold on a poset.
 In Section~\ref{subsec:coordinate}, we show that, given a poset we introduce, the manifold of probability distributions will always have the same dually flat structure as that of the exponential family of the original variable set (Equations~\eqref{eq:loglinear} and~\eqref{eq:eta}).
 In Section~\ref{subsec:mixed}, we present an efficient algorithm to decompose information on a poset (Algorithms~\ref{alg:mixedsingle},~\ref{alg:mixedmulti} and Theorem~\ref{theorem:pythagoras}).
\begin{ISIT}
 As a representative application, in Section~\ref{sec:learning}, we show that our algorithm can efficiently isolate information of arbitrary order interactions of events.
 We summarize and conclude the paper in Section~\ref{sec:conclusion}.
\end{ISIT}
\begin{full}
 We discuss a metric on the manifold in Section~\ref{subsec:distance} and the statistical significance in Section~\ref{subsec:significance}.
 Section~\ref{sec:MIdecompose} is devoted to introducing nonnegative decomposition of the mutual information (Theorem~\ref{theorem:MIKLdecompose}) with proposing a new notion of the refined mutual information, which is a generalized concept of mutual information.
 In contrast to the above two sections in which a poset is assumed to be given beforehand, Section~\ref{sec:learning} discusses how to construct such structured spaces from data.
 As a representative application, we show in Section~\ref{subsec:binary} that our algorithm can efficiently isolate information of arbitrary order interactions of events.
 We summarize and conclude the paper in Section~\ref{sec:conclusion}.
\end{full}

\begin{figure}[t]
 \centering
 \includegraphics[width=\linewidth]{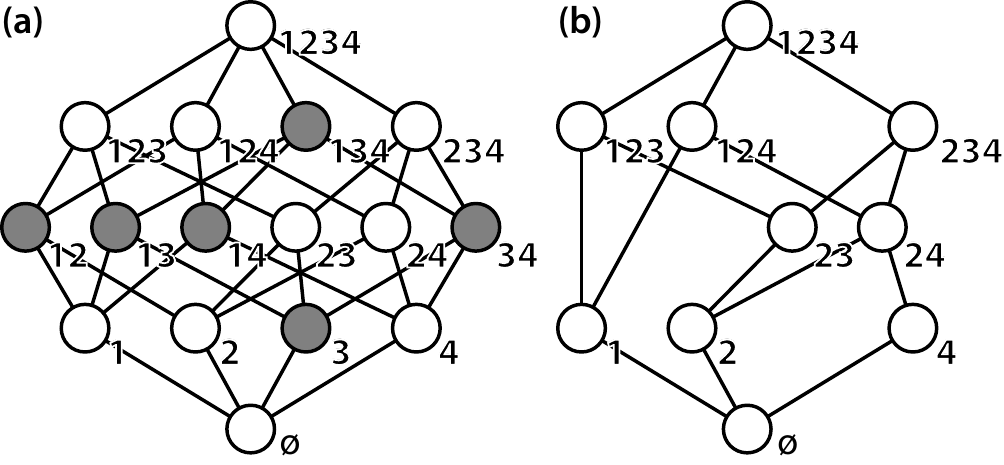}
 \caption{Hierarchy of combinations of four events $e_1, e_2, e_3$, and $e_4$. Numbers denote corresponding events.
 \textbf{(a)} The complete hierarchy of combinations of events. \textbf{(b)} Incomplete hierarchy by removing gray combinations in \textbf{(a)}.}
 \label{figure:hierarchy}
\end{figure}

\begin{full}
 \begin{table}[t]
 \caption{Notation.}
 \label{table:notation}
 \begin{tabularx}{\linewidth}{Xl}
  \toprule
  $S$ & Poset\\
  $\bot$ & Bottom element of $S$, i.e., $\bot \le x$ for all $x \in S$\\
  $S^+$ & $S \setminus \{\bot\}$\\
  $s, x, y, z$ & Objects in $S$\\
  $I, J, K$ & Subsets of $S$\\
  ${\downarrow}I$, ${\uparrow}I$ & Lower and upper sets, ${\downarrow}x = {\downarrow}\{x\}$ and ${\uparrow}x = {\uparrow}\{x\}$\\
  $y \lessdot x$ & $x$ covers $y$, i.e., $y < x$ and $y \le z < x$ implies $y = z$\\
  $X, Y$ & Discrete random variable,\\
  & $X$ is always with the alphabet $S$\\
  $p, q, r$ & Probability distribution on $S$\\
  $i, j, k, l$ & Natural number\\
  $N$ & Sample size\\
  $\vec{\Sc}$ & Manifold on $S$;\\
  & $\vec{\Sc} = \{\,p \mid \forall x \in S, p(x) > 0\ \text{and} \sum_{x \in S} p(x) = 1\,\}$\\
  $\theta$, $\eta$ & Coordinate systems of $\vec{\Sc}$\\
  & (see Equations~\eqref{eq:theta_estimation}, \eqref{eq:eta})\\
  $\theta_p$, $\eta_p$ & $\theta$- and $\eta$-coordinates of $p$\\
  $E_I$ & Submanifold of $\vec{\Sc}$;\\
  & $E_I(p) = \{\,q \in \vec{\Sc} \mid \theta_{q}(x) = \theta_p(x) \text{ for all } x \in I\,\}$\\
  $M_I$ & Submanifold of $\vec{\Sc}$;\\
  & $M_I(p) = \{\,q \in \vec{\Sc} \mid \eta_{q}(x) = \eta_p(x) \text{ for all } x \in I\,\}$\\
  $\KL(p, q)$ & KL divergence from $p$ to $q$\\
  $H(X)$ & Entropy of $X$\\
  $v$ & Subvaluation (see Equation~\eqref{eq:valdef})\\
  $\MI(X, Y)$ & Mutual information between $X$ and $Y$\\
  $\RI{X}{Y}{I}{J}$ & Refined mutual information (see Definition~\ref{def:RMI})\\
  $\sigma$ & Threshold for the frequency of samples\\
  \bottomrule
 \end{tabularx}
 \end{table}
\end{full}

\section{Dually Flat Manifold on Posets}\label{sec:probPosets}
Suppose that $X$ is a discrete random variable and $p(x) = \Pr(X = x)$ with $x \in S$ is a probability mass function on a finite set $S$.
In information geometry~\cite{Amari01,Amari07}, each distribution is treated as a mapping $p {:} S \to \R$ and
the set of all probability distributions is understood to be a $(|S| - 1)$-dimensional manifold
\begin{ISIT}
 $\vec{\Sc} = \{\,p \mid p(x) > 0 \text{ for all } x \in S,\hspace*{5pt} \sum_{x \in S} p(x) = 1\,\}$,
\end{ISIT}
\begin{full}
 \begin{align*}
 \vec{\Sc} = \Set{p | p(x) > 0 \text{ for all } x \in S,\hspace*{5pt} \sum_{x \in S} p(x) = 1},
 \end{align*}
\end{full}
where probabilities form a coordinate system of $\vec{\Sc}$, called the \emph{$p$-coordinate system}.
Information geometry gives us two more coordinate systems of $\vec{\Sc}$, the $\theta$-coordinate system and the $\eta$-coordinate system,
which are known to be dually orthogonal and key to decomposing KL divergence via the mixed coordinate system of $\theta$ and $\eta$.
We introduce such two coordinates in Section~\ref{subsec:coordinate} and show decomposition of KL divergence in Section~\ref{subsec:mixed}.

We consider the case where $S$ is a partially ordered set, or a \emph{poset}, which is one of the most fundamental structured space in computer science and mathematics.
A \emph{partial order} ``$\le$'' satisfies the following three properties: for all $x, y, z \in S$, (1) $x \le x$ (reflexivity), (2) $x \le y$, $y \le x \Rightarrow x = y$ (antisymmetry), and (3) $x \le y$, $y \le z \Rightarrow x \le z$ (transitivity).
Throughout the paper, we assume that $S$ is always finite and includes the bottom element $\bot \in S$; that is, $\bot \le x$ for all $x \in S$.
We write the set $S \setminus \{\bot\}$ by $S^+$.

For a subset $I \subseteq S$, we denote a lower set ${\downarrow}I = \{\,x \in S \mid x \le s \text{ for some } s \in I\,\}$, an upper set ${\uparrow}I = \{\,x \in S \mid x \ge s \text{ for some } s \in I\,\}$, and ${\downarrow}x = {\downarrow}\{x\}$, ${\uparrow}x = {\uparrow}\{x\}$ for each $x \in S$.
In order theory, ${\downarrow}x$ is called the \emph{principal ideal} for $x$ and ${\uparrow}x$ is called the \emph{principal filter} for $x$~\cite{Davey02,Gierz03}, which are known to be fundamental mathematical objects in posets.

\subsection{$\theta$- and $\eta$-coordinate Systems}\label{subsec:coordinate}
Let us first introduce the $\theta$-\emph{coordinate system} of the manifold $\vec{\Sc}$, which is realized as a mapping $\theta {:} S \to \R$.
In the exponential family, $\theta$ is known to be the natural parameter, which is treated as an $n$-dimensional vector $\vec{\theta} = (\theta^1, \theta^2, \dots, \theta^n)$ and the distribution is in the form of
\begin{align}
 \label{eq:exponential_org}
 p(x; \vec{\theta}) = \exp\left(\,\sum_{i = 1}^{n} \theta^i F_i(x) - \psi(\vec{\theta})\,\right)
\end{align}
with $n$ functions $F_1, \dots, F_{n}$ and a normalizer $\psi(\vec{\theta})$~\cite{Amari07}.
This is re-written as
\begin{align}
 \label{eq:exponential_poset}
 p(x; \theta) = \exp\left(\,\sum_{s \in S^+} \theta(s) F_{s}(x) - \psi(\theta)\,\right)
 \end{align}
with $n = |S^+|$ in our setting, where there exists a one-to-one indexing mapping $\omega {:} S^+ \to \{1, 2, \dots, n\}$ such that $\theta(s)$ and $F_s$ correspond to $\theta^{\omega(s)}$ and $F_{\omega(s)}$ in Equation~\eqref{eq:exponential_org}, respectively.

Given a poset $S$, we propose to define $F_s(x)$ as
\begin{align*}
 F_s(x) &= \left\{
 \begin{array}{ll}
  1 & \text{if } s \le x, \\
  0 & \text{otherwise}
 \end{array}
 \right.
 \ \text{and}\ \psi(\theta) = -\log p(\bot).
\end{align*}
Interestingly, from Equation~\eqref{eq:exponential_poset}, we obtain the expansion of $\log p(x)$ as the sum of $\theta(s)$ of lower elements $s \le x$ in $S$:
\begin{align}
 \label{eq:loglinear}
 \log p(x) = \sum_{s \le x} \theta(s).
\end{align}
Note that this equation can be viewed as a generalization of the well-known log-linear model:
\begin{align*}
 \log p(\vec{x}) = &\sum_{i} \theta^i x^i + \sum_{i < j} \theta^{ij} x^i x^j + \sum_{i < j < k} \theta^{ijk} x^i x^j x^k \nonumber\\
 &+ \dots + \theta^{1\dots n} x^1 \dots x^n - \psi
\end{align*}
for $n$-dimensional binary vector $\vec{x} = (x^1, \dots, x^n) \in \{0, 1\}^n$.

Thus, given a probability distribution $p \in \vec{\Sc}$, the $\theta$-coordinate system $\theta {:} S \to \R$ is recursively computed as
\begin{align}
 \label{eq:theta_estimation}
 \theta(x) = \log p(x) - \displaystyle\sum_{s < x} \theta(s)
\end{align}
starting from the bottom $\theta(\bot) = \log p(\bot)$.

\begin{figure}[t]
 \centering
 \includegraphics[width=.7\linewidth]{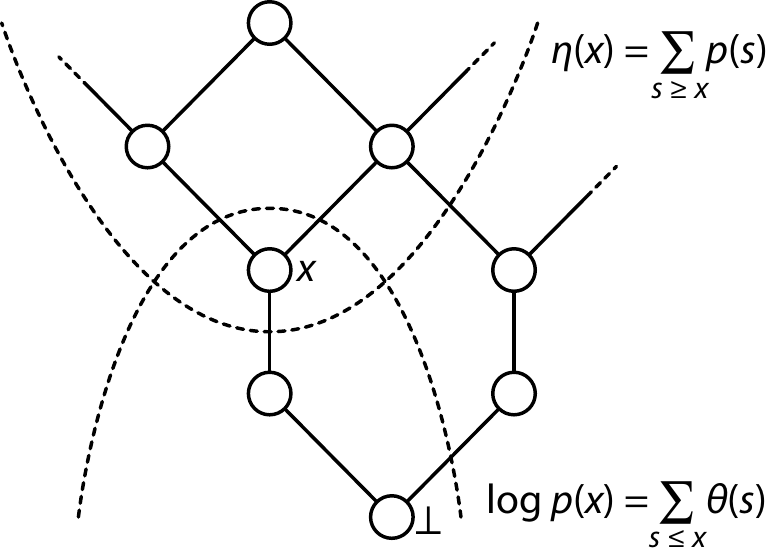}
 \caption{$p(x)$, $\theta(x)$, and $\eta(x)$ on poset.}
 \label{figure:poset}
\end{figure}

In information geometry~\cite{Amari01}, the natural parameter $\theta$ of the exponential family is known to be the $e$-affine coordinate of the $e$-flat manifold $\vec{\Sc}$,
which means that our formulation of $\theta$ in Equation~\eqref{eq:theta_estimation} is the $e$-affine coordinate.
The $m$-affine coordinate $\eta {:} S \to \R$, an alternative coordinate system that introduces the duality to $\vec{\Sc}$, is given as the expectation of the parameter $F_s(x)$ for each $s \in S$.
In our case $\eta$ is given as follows:
\begin{align}
 \label{eq:eta}
 \eta(s) = \mathbb{E}[F_s(x)] = \sum_{x \ge s} p(x) = \Pr(X \ge s).
\end{align}
Relationships of $p$, $\theta$, and $\eta$ are illustrated in Figure~\ref{figure:poset}.

The two coordinate systems $\theta$ and $\eta$ are connected with each other by the Legendre transformation.
The remarkable property is that $\theta$ and  $\eta$ are \emph{dually orthogonal}:
\begin{align}
 \label{eq:orthogonal}
 \mathbb{E}\left[\,\frac{\partial}{\partial\theta(s)}\log p(x; \theta) \frac{\partial}{\partial\eta(s')}\log p(x; \eta)\,\right] = \delta(s, s')
\end{align}
for every $s, s' \in S^{+}$ with the Kronecker delta $\delta$ such that $\delta(s, s') = 1$ if $s = s'$ and $\delta(s, s') = 0$ otherwise~\cite{Amari01}.
This property is essential to construct a mixed coordinate system of $\theta$ and $\eta$ in the next subsection.

Our finding connects two fundamental areas, information geometry and order theory, that have been independently studied to date.
Given the $\theta$-coordinate, our result means that the $p$-coordinate is generated from the set of \emph{principal ideals} and the $\eta$-coordinate is generated from the set of \emph{principal filters}.
More specifically, let $f(I, \theta) = \exp(\,\sum_{s \in I} \theta(s)\,)$ and $g(I, p) = \sum_{s \in I} p(s)$ for every $I \subseteq S$.
For each $x \in S$, we have $p(x) = f({\uparrow}x, \theta)$ with the principal ideal ${\uparrow}x$ for $x$ and $\eta(x) = g({\downarrow}x, p)$ with the principal filter ${\downarrow}x$.

\begin{full}
\begin{algorithm}[t]
 \begin{small}
 \SetKwInOut{Input}{input}
 \SetKwInOut{Output}{output}
 \SetFuncSty{textrm}
 \SetCommentSty{textrm}
 \SetKwFunction{ComputeThetaAll}{{\scshape ComputeThetaAll}}
 \SetKwFunction{ComputeTheta}{{\scshape ComputeTheta}}
 \SetKwFunction{AggregateTheta}{{\scshape AggregateTheta}}
 \SetKwProg{myfunc}{}{}{}

 \myfunc{\ComputeThetaAll{$S$}}{
 Let $x_1, x_2, \dots, x_{|S|}$ be the topological ordering of $S$\\
 \ForEach(\tcp*[f]{\makebox[130pt][l]{Initializing flags indicating}}){$s \in S$ }{
 $f(s) \gets 0$ \tcp*[f]{\makebox[130pt][l]{whether objects are already visited}}
 }
 \For{$i \gets 1$ \KwTo $|S|$}{
 $\theta(x_i) \gets {}$\ComputeTheta{$x_i$}
 }
 }
 \myfunc{\ComputeTheta{$x$}}{
 $\theta(x) \gets \log p(x)$\\
 \ForEach{$s \lessdot x$}{
 $\theta(x) \gets \theta(x) - {}$\AggregateTheta{$s, \omega(x)$}\\
 }
 \Return $\theta(x)$\\
 }
 \myfunc{\AggregateTheta{$x, i$}}{
 \eIf{$f(x) = i$}{
 $\theta_{\text{sum}} \gets 0$
 }{
 $f(x) \gets i$\\
 $\theta_{\text{sum}} \gets \theta(x)$\\
 \ForEach{$s \lessdot x$}{
 $\theta_{\text{sum}} \gets \theta_{\text{sum}} + {}$\AggregateTheta{$s, i$}
 }
 }
 \Return $\theta_{\text{sum}}$
  }
 \caption{Compute $\theta$ from $p$}
 \label{alg:theta}
 \end{small}
\end{algorithm}

We present efficient algorithms that compute $\theta$ and $\eta$ from $p$.
The function {\scshape ComputeTheta}$(x)$ in Algorithm~\ref{alg:theta} recursively computes $\theta(x)$ from $\theta(s)$ with $s < x$, where we assume that, in our data structure, each pair $x, y \in S$ is connected with each other if $x \lessdot y$ or $y \lessdot x$.
The time complexity of computing each $\theta(x)$ is $O(|{\downarrow}\,x|)$ as the algorithm visits every $s < x$.
To compute all $\theta(x)$, first $S$ should be topologically sorted according to the order $\le$, resulting in a sequence $x_1, x_2, \dots, x_{|S|}$ such that $S = \{x_1, \dots, x_{|S|}\}$ and $x_i \le x_j$ for all $i \le j$ if they are comparable.
Then $\theta(x_1), \theta(x_2), \dots, \theta(x_{|S|})$ can be computed one after another using {\scshape ComputeTheta}$(x)$.
The time complexity of computing all $\theta(x)$, the function {\scshape ComputeThetaAll}$(S)$ in Algorithm~\ref{alg:theta}, is $O(\sum_{x \in S} |{\downarrow}\,x|) \le O(|S|^2)$.

In a similar way, the function {\scshape ComputeEta}$(x)$ in Algorithm~\ref{alg:eta} computes each $\eta(x)$ and {\scshape ComputeEtaAll}$(S)$ computes $\eta(x)$ for all $x \in S$ by tracing the set $S$ in reversed topological order.
The time complexity of computing $\eta(x)$ is $O(|{\uparrow}\,x|)$ and that of computing all $\eta(x)$ is $O(\sum_{x \in S} |{\uparrow}\,x|) \le O(|S|^2)$.

\begin{algorithm}[t]
 \begin{small}
 \SetFuncSty{textrm}
 \SetCommentSty{textrm}
 \SetKwFunction{ComputeEtaAll}{{\scshape ComputeEtaAll}}
 \SetKwFunction{ComputeEta}{{\scshape ComputeEta}}
 \SetKwFunction{AggregateEta}{{\scshape AggregateEta}}
 \SetKwProg{myfunc}{}{}{}

 \myfunc{\ComputeEtaAll{$S$}}{
 Let $x_1, x_2, \dots, x_{|S|}$ be the topological ordering of $S$\\
 \ForEach(\tcp*[f]{\makebox[130pt][l]{Initializing flags indicating}}){$s \in S$ }{
 $f(s) \gets 0$ \tcp*[f]{\makebox[130pt][l]{whether objects are already visited}}
 }
 \For{$i \gets |S|$ \KwTo $1$}{
 $\eta(x_i) \gets {}$\ComputeEta{$x_i$}
 }
 }
 \myfunc{\ComputeEta{$x$}}{
 $\eta(x) \gets p(x)$\\
 \ForEach{$s \gtrdot x$}{
 $\eta(x) \gets \eta(x) + {}$\AggregateEta{$s, \omega(x)$}
 }
 \Return $\eta(x)$\\
 }
 \myfunc{\AggregateEta{$x, i$}}{
 \eIf{$f(x) = i$}{
 $\eta_{\text{sum}} \gets 0$
 }{
 $f(x) \gets i$\\
 $\eta_{\text{sum}} \gets p(x)$\\
 \ForEach{$s \gtrdot x$}{
 $\eta_{\text{sum}} \gets \eta_{\text{sum}} + {}$\AggregateEta{$s, i$}
 }
 }
 \Return $\eta_{\text{sum}}$
 }
 \caption{Compute $\eta$ from $p$}
 \label{alg:eta}
 \end{small}
\end{algorithm}
\end{full}

\subsection{Information Decomposition via Mixed Coordinate System}\label{subsec:mixed}
We introduce the mixed coordinate system of $\theta$ and $\eta$~\cite{Amari01} on a poset, the key tool to analyze distributions on $S$.
The mixed coordinate system $\xi_I : S^+ \to \R$ with respect to a subset $I \subseteq S^+$ is a coordinate system of $\vec{\Sc}$ such that
\begin{align*}
 \xi_I(x) \defeq \left\{
 \begin{array}{ll}
  \eta(x)   & \text{if } x \in S^+ \setminus I, \\
  \theta(x) & \text{if } x \in I.
 \end{array}
 \right.
\end{align*}
Using the system, we can blend two distributions $p$ and $q$: The \emph{mixed distribution} of a pair of distributions $(p, q)$ with respect to $I \subseteq S^+$ is the distribution $r \in \vec{\Sc}$ such that
 \begin{align*}
 \left\{
 \begin{array}{ll}
  \eta_{r}(x) = \eta_p(x)     & \text{if } x \in S^+ \setminus I, \\
  \theta_{r}(x) = \theta_q(x) & \text{if } x \in I,
 \end{array}
 \right.
 \end{align*}
and $r(\bot) = 1 - \sum_{s \in S^+} r(x)$,
where we write $\theta$- and $\eta$-coordinates corresponding to $p$ by $\theta_p$ and $\eta_p$, respectively, to clarify that $p$, $\theta_p$, and $\eta_p$ are the same point in $\vec{\Sc}$.
Due to the orthogonality of $\theta$ and $\eta$ in Equation~\eqref{eq:orthogonal}, this distribution is always unique and well-defined.

\begin{figure}[t]
 \centering
 \includegraphics[width=.65\linewidth]{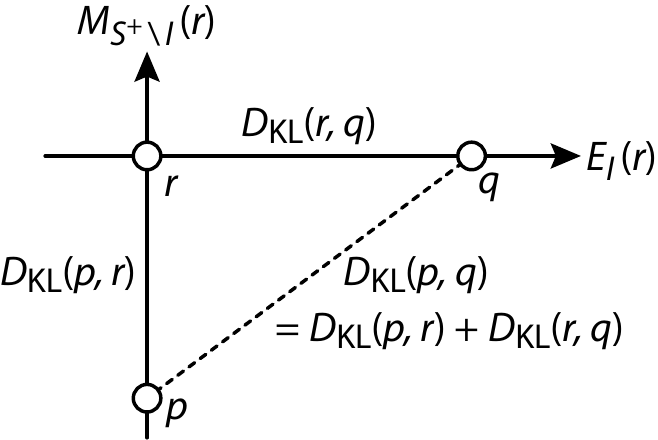}
 \caption{Pythagoras theorem (Theorem~\ref{theorem:pythagoras}).}
 \label{figure:pythagoras}
\end{figure}

Here we show decomposition of the Kullback--Leibler (KL) divergence between two probability distributions $p, q$:
\begin{align}
 \label{eq:kl}
 \KL(p, q) = \sum_{x \in S} p(x) \log\frac{p(x)}{q(x)}
\end{align}
using their mixed distribution $r$.
\begin{theorem}[Pythagoras theorem]
 \label{theorem:pythagoras}
 Given two distributions $p, q \in \vec{\Sc}$ and $I \subseteq S^+$.
 For a mixed distribution $r$ of $(p, q)$ and $r'$ of $(q, p)$ with respect to $I$,
 \begin{align}
  \KL(p, q) &= \KL(p, r) + \KL(r, q),\\
  \KL(q, p) &= \KL(q, r') + \KL(r', p).
 \end{align}
 \end{theorem}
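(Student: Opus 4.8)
The plan is to reduce both identities to a single bilinear expression for the KL divergence in the $\theta$/$\eta$ coordinates. Starting from \eqref{eq:kl}, I would substitute the log-linear expansion \eqref{eq:loglinear}, $\log p(x) = \sum_{s \le x}\theta_p(s)$, for both $p$ and $q$, so that $\log\frac{p(x)}{q(x)} = \sum_{s \le x}(\theta_p(s) - \theta_q(s))$. Interchanging the order of summation over $x$ and $s$ and using the definition of $\eta$ in \eqref{eq:eta}, namely $\eta_p(s) = \sum_{x \ge s} p(x)$, collapses the inner sum and yields the key identity $\KL(p,q) = \sum_{s \in S}(\theta_p(s) - \theta_q(s))\,\eta_p(s)$. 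Note that the term for $s = \bot$ carries $\eta_p(\bot) = \sum_{x} p(x) = 1$, and that this identity is nothing but the canonical (Bregman) divergence written in dual coordinates, with the $\eta$ of the \emph{first} argument playing the role of the weighting.

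With this formula in hand the rest is an algebraic cancellation. I would apply it three times --- to $\KL(p,q)$, $\KL(p,r)$, and $\KL(r,q)$ --- and combine. After grouping the two terms that share the factor $\eta_p(s)$, the discrepancy $\KL(p,q) - \KL(p,r) - \KL(r,q)$ collapses to the single cross term $\sum_{s \in S}(\theta_r(s) - \theta_q(s))\,(\eta_p(s) - \eta_r(s))$. The goal is then to show every summand vanishes. For $s \in I$ the definition of the mixed distribution gives $\theta_r(s) = \theta_q(s)$, killing the first factor; for $s \in S^+ \setminus I$ it gives $\eta_r(s) = \eta_p(s)$, killing the second. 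The only remaining index is $s = \bot$, which lies in $S \setminus I$ because $I \subseteq S^+$; there the second factor is again zero since $\eta_p(\bot) = \eta_r(\bot) = 1$ for any distribution. Hence the cross term is identically zero and the first identity follows. The second identity is obtained by the same argument after exchanging the roles of $p$ and $q$ and replacing $r$ by $r'$.

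The main obstacle --- really the only place requiring care --- is the bookkeeping around the bottom element and the well-definedness of $r$. One must confirm that the bilinear formula holds with the sum taken over all of $S$ (not just $S^+$), track the sign and the $\bot$ contribution through the cancellation, and invoke the dual orthogonality \eqref{eq:orthogonal} to guarantee that the mixed distribution $r$ specified by a $\theta$-part on $I$ and an $\eta$-part on $S^+ \setminus I$ exists and is unique, so that $\theta_r$ and $\eta_r$ are simultaneously well-defined objects to which the identity can legitimately be applied. Once the index set is partitioned correctly into $I$, $S^+ \setminus I$, and $\{\bot\}$, the vanishing of each summand is immediate.
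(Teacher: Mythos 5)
Your proof is correct, but it takes a genuinely different route from the paper's. The paper proves the theorem in one step by appealing to Theorem~3 of Amari~\cite{Amari01}: it observes that the submanifolds $E_I(r)$ and $M_{S^+ \setminus I}(r)$ are complementary and intersect orthogonally at $r$ by \eqref{eq:orthogonal}, so the $m$-geodesic from $p$ to $r$ meets the $e$-geodesic from $r$ to $q$ orthogonally and the generalized Pythagorean theorem applies. You instead give a self-contained algebraic argument: substituting \eqref{eq:loglinear} into \eqref{eq:kl} and swapping the (finite) sums via \eqref{eq:eta} yields the bilinear form $\KL(p,q) = \sum_{s \in S}\left(\theta_p(s) - \theta_q(s)\right)\eta_p(s)$, which reduces the claim to the vanishing of the cross term $\sum_{s \in S}\left(\theta_r(s) - \theta_q(s)\right)\left(\eta_p(s) - \eta_r(s)\right)$; this dies term by term directly from the definition of the mixed distribution, with the $\bot$ index handled by $\eta_p(\bot) = \eta_r(\bot) = 1$. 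Your bookkeeping at $\bot$ is right, and the identity is exactly the canonical divergence in dual coordinates. What each approach buys: the paper's proof is shorter and situates the result in its geometric context, but it outsources the substantive content to \cite{Amari01}; yours is elementary and self-contained, uses \eqref{eq:orthogonal} only to guarantee existence and uniqueness of $r$ (the cancellation itself needs nothing but the defining constraints of $r$), and makes explicit which coordinate constraint kills which factor --- arguably more illuminating for a reader not already fluent in dually flat geometry.
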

\begin{proof}
 We can directly use Theorem~3 in~\cite{Amari01}, which shows that $\KL(p, q) = \KL(p, r) + \KL(r, q)$ holds if $m$-geodesic connecting $p$ and $r$ is orthogonal at $r$ to the $e$-geodesic connecting $r$ and $q$.
 Let two submanifolds $E_I(p)$ and $M_I(p)$ of $\vec{\Sc}$ be
\begin{align*}
 E_I(r) &\defeq \left\{\,\nu \in \vec{\Sc} \mid \theta_{\nu}(x) = \theta_r(x) \text{ for all } x \in I\,\right\},\\
 M_{S^+ \setminus I}(r) &\defeq \left\{\,\nu \in \vec{\Sc} \mid \eta_{\nu}(x) = \eta_r(x) \text{ for all } x \in S^+ \setminus I\,\right\}.
\end{align*}
 Since $E_{I}(r)$ and $M_{S^+ \setminus I}(r)$ are complementary and orthogonally intersect at $r$ from Equation~\eqref{eq:orthogonal},
 connection of $p$ and $r$ (resp.\ $r$ to $q$) is clearly $m$-geodesic (resp.\ $e$-geodesic; Figure~\ref{figure:pythagoras}).
 Therefore $\KL(p, q) = \KL(p, r) + \KL(r, q)$ follows.
 The second equation $\KL(q, p) = \KL(q, r') + \KL(r', p)$ can be proved in the same way.
\end{proof}
Moreover, for a hierarchical collection $\{I_0, I_1, \dots, I_k\}$ of subsets of $S$ such that $\emptyset = I_0 \subseteq I_1 \subseteq \dots \subseteq I_{k} = S^+$,
\begin{align}
 \label{eq:hdocmp}
  \KL(p, q) = \sum_{i = 1}^{k} \KL(r_{i - 1}, r_{i}),
 \end{align}
where $r_i$ is the mixed distribution of $(p, q)$ with respect to $I_i$ for each $i \in \{0, 1, \dots, k\}$.
Note that $r_0 = p$ and $r_k = q$.

Let $p_0$ be the uniform distribution such that $p_0(x) = 1/|S|$ for all $x \in S$, which is the origin of the $\theta$-coordinate because $\theta_{p_0}(x) = 0$ for all $x \in S^+$.
Since for the entropy $H(X)$ with a probability distribution $p$
\begin{align*}
 H(X)  = - \sum_{x \in S} p(x) \log p(x) = \log |S| - \KL(p, p_0)
\end{align*}
holds and $\log |S|$ is a constant,
entropy decomposition is achieved by our KL divergence decomposition in Theorem~\ref{theorem:pythagoras}:
\begin{align*}
 H(X)  = \log |S| - \left(\,\KL(p, r) + \KL(r, p_0)\,\right),
\end{align*}
where $r$ is the mixed distribution of $(p, p_0)$ with respect to $I \subseteq S^+$.
The entropy is decomposed into the contribution $\KL(p, r)$ of elements in $I$ and the other $\KL(r, p_0)$.
We can therefore obtain the \emph{information gain} for every subset $I \in S$ as the KL divergence $\KL(p, p_{I})$, where $p_I$ is the mixed distribution of $(p, p_0)$ with respect to $I$.


\subsection{Computation of Mixed Distributions}\label{subsec:KLdecompose}

\begin{algorithm}[t]
 \begin{small}
 \SetKwInOut{Input}{input}
 \SetKwInOut{Output}{output}
 \SetFuncSty{textrm}
 \SetCommentSty{textrm}
 \SetKwFunction{ComputeMixedSingle}{{\scshape ComputeMixedSingle}}
 \SetKwFunction{ComputeThetaSingle}{{\scshape ComputeThetaSingle}}
 \SetKwFunction{ComputePSingle}{{\scshape ComputePSingle}}
 \SetKwFunction{ComputeTheta}{{\scshape ComputeTheta}}
 \SetKwFunction{AggregateDiff}{{\scshape AggregateDiff}}
 \SetKwProg{myfunc}{}{}{}

 \myfunc{\ComputeMixedSingle{$x^*$}}{
 Let $x_1, \dots, x_{m} = x^*$ be the topological ordering of ${\downarrow}x^*$\\
 Search $r(x^*)$ that gives $\theta_r(x^*) = \theta_q(x^*)$ using \ComputeThetaSingle{$x^*, r(x^*)$} (e.g.\,bisection method)\\
 \Return $r$
 }
 \myfunc{\ComputeThetaSingle{$x^*, r(x^*)$}}{
 \ForEach(\tcp*[f]{\makebox[130pt][l]{Initializing flags indicating}}){$s \in {\downarrow}x^*$ }{
 $f(s) \gets 0$ \tcp*[f]{\makebox[130pt][l]{whether objects are already visited}}
 }
 \For{$i \gets m - 1$ \KwTo $1$}{
 $r(x_i) \gets {}$\ComputePSingle{$x_i$}
 }
 \For{$i \gets 1$ \KwTo $m$}{
 Compute $\theta_r(x_i)$
 }
 \Return $\theta_r(x^*)$\\
 }
 \myfunc{\ComputePSingle{$x$}}{
 $r(x) \gets p(x)$\\
 \ForEach(\tcp*[f]{\makebox[135pt][l]{$x \lessdot s$ iff $x < s$, $x \le y < s \Rightarrow x = y$}}){$s \gtrdot x$}{
 $r(x) \gets r(x) + {}$\AggregateDiff{$s, \omega(x)$}
 \tcp*[f]{\makebox[130pt][l]{$\omega(x)$: index of $x$}}
 }
 \Return $r(x)$\\
 }
 \myfunc{\AggregateDiff{$x, i$}}{
 \eIf{$f(x) = i$ or $x \not\le x^*$}{
 $p_{\text{diff}} \gets 0$
 }{
 $f(x) \gets i$\\
 $p_{\text{diff}} \gets p(x) - r(x)$\\
 \ForEach{$s \gtrdot x$}{
 $p_{\text{diff}} \gets p_{\text{diff}} + {}$\AggregateDiff{$s, i$}
 }
 }
 \Return $p_{\text{diff}}$
 }
 \caption{Compute the mixed distribution $r$ of $(p, q)$ with respect to a singleton $I = \{x^*\}$}
 \label{alg:mixedsingle}
 \end{small}
\end{algorithm}

\vspace*{-5pt}
\begin{algorithm}[t]
 \begin{small}
 \SetKwInOut{Input}{input}
 \SetKwInOut{Output}{output}
 \SetFuncSty{textrm}
 \SetCommentSty{textrm}
 \SetKwFunction{ComputeMixedMulti}{{\scshape ComputeMixedMulti}}
 \SetKwFunction{ComputeMixedSingle}{{\scshape ComputeMixedSingle}}
 \SetKwProg{myfunc}{}{}{}

 \myfunc{\ComputeMixedMulti{$I$}}{
 \Repeat{convergence of $r$}{
 \For{$x^* \in I$}{
  \ComputeMixedSingle{$x^*$}%
  \tcp*[f]{Algorithm~\ref{alg:mixedsingle}}
 }
 }
 }
 \caption{Compute the mixed distribution $r$ of $(p, q)$ with respect to $I \subseteq S^+$}
 \label{alg:mixedmulti}
 \end{small}
\end{algorithm}

Here, we show how to compute the mixed distribution $r$ from $p$ and $q$ with a subset $I \subseteq S^+$\,\footnote{An implementation is available at: \url{https://github.com/mahito-sugiyama/information-decomposition}}.
First we present an algorithm to compute $r$ in a simple case, where $I$ is a singleton and we let $I = \{x^*\}$.
Since $\eta_r(x) = \eta_p(x)$ for all $x \not= x^*$, it is clear that $r(x) = p(x)$ for any $x \not\le x^*$.
Therefore, we have focused on computing only $r(x)$ with $x \le x^*$.

Assume $r(x^*)$ is fixed and let $I_{\ge x} = \{\,s \in {\downarrow}x^* \mid s \ge x\,\}$ and $I_{> x} = I_{\ge x} \setminus \{x\}$.
For
each $x \in {\downarrow}x^*$ with $x \not= x^*$, we have $\sum_{s \in I_{\ge x}} p(s) = \sum_{s \in I_{\ge x}} r(s)$ from $\eta_p(x) = \eta_r(x)$.
Hence $r(x)$ is obtained as $r(x) = p(x) + \sum_{s \in I_{> x}} \left(\,p(s) - r(s)\,\right)$.
Thus if ${\downarrow}x^*$ is topologically sorted as $x_0, x_1, \dots, x_{m}$ with $x_0 = \bot$ and $x_m = x^*$,
we can compute $r(x_{m})$, $r(x_{m - 1})$, $\dots$, $r(x_0)$ one after another.
The function {\scshape ComputePSingle}$(x)$ in Algorithm~\ref{alg:mixedsingle} performs for this computation.
Since $\theta_{r}(x_0)$, $\theta_{r}(x_1)$, $\dots$, $\theta_{r}(x_m) = \theta_r(x^*)$ can be computed after computing all $r(x)$ under fixed $r(x^*)$,
$\theta_{r}(x^*)$ is numerically computed as a function of $r(x^*)$.
This process is summarized in the function {\scshape ComputeThetaSingle}$(x^*, r(x^*))$ in Algorithm~\ref{alg:mixedsingle}.
As the function is continuous, we can use a numerical optimization method, such as the bisection method, to efficiently search $r(x^*)$ giving the solution $\theta_{r}(x^*) = \theta_q(x^*)$.
The time complexity of computing $r$ is $O(h(x^*)|{\downarrow}x^*|^2) \le O(h(x^*)|S|^2)$, where $h(x^*)$ is the number of iterations for solving $\theta_{r}(x^*) = \theta_q(x^*)$.

We next consider the general case.
Let $I = \{x_{1}^*, \dots, x_l^*\}$.
Although it is again difficult to analytically compute the mixed distribution $r$, we can numerically compute the distribution $r$ by iterating computation of $\theta_{r}(x_i^*)$ for each $x_i^*$ while fixing $\theta(x_j^*)$ with $j \not= i$,
which is inspired by alternating optimization over $I$ mainly used in the field of convex optimization.
The overall process is shown in Algorithm~\ref{alg:mixedmulti}.
\begin{lemma}
 Algorithm~\ref{alg:mixedmulti} always converges to the mixed distribution $r \in \vec{\Sc}$ of $(p, q)$ with respect to $I \subseteq S^+$.
 \end{lemma}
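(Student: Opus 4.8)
The plan is to read Algorithm~\ref{alg:mixedmulti} as a cyclic coordinate-descent scheme whose monotonicity is controlled by the Pythagoras theorem (Theorem~\ref{theorem:pythagoras}). First I would name the iterates: let $r^{(0)} = p, r^{(1)}, r^{(2)}, \dots$ be the distributions produced by the successive inner calls to \textsc{ComputeMixedSingle}, where the base distribution fed to each call is the \emph{current} iterate, so that $r^{(t+1)}$ is obtained from $r^{(t)}$ by one application of Algorithm~\ref{alg:mixedsingle} at some $x_i^* \in I$. By the specification of that algorithm, $r^{(t+1)}$ is exactly the mixed distribution of the pair $(r^{(t)}, q)$ with respect to the singleton $\{x_i^*\}$; that is, $\eta_{r^{(t+1)}}(x) = \eta_{r^{(t)}}(x)$ for every $x \ne x_i^*$ and $\theta_{r^{(t+1)}}(x_i^*) = \theta_q(x_i^*)$.

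The first substantive step is to record the invariant $r^{(t)} \in M_{S^+ \setminus I}(p)$ for all $t$. Since $r^{(0)} = p$ trivially satisfies it and each update preserves $\eta(x)$ for every $x \ne x_i^* \in I$, in particular it preserves $\eta(x)$ on all of $S^+ \setminus I$; hence the constraint $\eta_{r^{(t)}}(x) = \eta_p(x)$ there is never violated, and the whole trajectory remains on the m-flat submanifold that the target mixed distribution $r$ also inhabits. The second step is monotonicity: applying Theorem~\ref{theorem:pythagoras} with roles $(p,q,I) \mapsto (r^{(t)}, q, \{x_i^*\})$ and mixed distribution $r^{(t+1)}$ gives $\KL(r^{(t)}, q) = \KL(r^{(t)}, r^{(t+1)}) + \KL(r^{(t+1)}, q)$. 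Thus the nonnegative sequence $\KL(r^{(t)}, q)$ is nonincreasing, hence convergent, and telescoping forces the per-step decrements $\KL(r^{(t)}, r^{(t+1)})$ to tend to $0$.

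The final step identifies the limit. Because the mixed coordinate map is a global coordinate system of $\vec{\Sc}$ (by the orthogonality~\eqref{eq:orthogonal} together with the Legendre duality of $\theta$ and $\eta$), it suffices to show the mixed coordinates of $r^{(t)}$ converge to those of $r$. The $\eta$-coordinates on $S^+ \setminus I$ are pinned exactly by the invariant, so only the $\theta$-coordinates on $I$ are at issue. Here I would use that immediately after $x_i^*$ is processed one has $\theta(x_i^*) = \theta_q(x_i^*)$ exactly, while the vanishing of the increments forces the distribution, and hence each $\theta(x_i^*)$, to drift by $o(1)$ over the remainder of a sweep; taking limits along sweep boundaries then yields $\theta_{r^{(t)}}(x) \to \theta_q(x)$ simultaneously for all $x \in I$. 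Any limit point $r^\infty$ therefore satisfies $\theta_{r^\infty}(x) = \theta_q(x)$ on $I$ and $\eta_{r^\infty}(x) = \eta_p(x)$ on $S^+ \setminus I$, i.e.\ $r^\infty \in E_I(q) \cap M_{S^+ \setminus I}(p)$, which by~\eqref{eq:orthogonal} is the single well-defined mixed distribution $r$. Continuity of the inverse coordinate map then upgrades subsequential convergence to convergence of the whole sequence, so $r^{(t)} \to r$.

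The hard part will be this last step: converting KL-monotonicity into honest convergence of the $\theta$-coordinates on $I$. The delicacy is that each single update pins only its own coordinate and can disturb the others, so one must argue---via the vanishing increments together with uniform continuity of the coordinate maps on a compact interior region of $\vec{\Sc}$---that no coordinate keeps oscillating and, crucially, that the iterates do not escape toward the boundary of the simplex, where $\theta$ diverges. Controlling this boundary behaviour, rather than establishing the monotonicity itself, is the real crux of the argument.
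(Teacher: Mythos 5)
Your overall strategy---cyclic coordinate updates whose monotonicity comes from Theorem~\ref{theorem:pythagoras}---matches the paper's, but you pick a different Lyapunov functional, and that choice is what creates the hard residual step you flag at the end. You telescope $\KL(r^{(t)}, q) = \KL(r^{(t)}, r^{(t+1)}) + \KL(r^{(t+1)}, q)$, i.e.\ you measure progress toward $q$; this gives you vanishing increments but then obliges you to identify the limit separately, which is exactly the part you leave as a sketch (arguing that all $\theta$-coordinates on $I$ settle simultaneously and that iterates do not drift to the boundary of the simplex). The paper instead applies the Pythagorean relation to the pair $(r_i, r)$, where $r$ is the target mixed distribution itself: since $\theta_r(x^*) = \theta_q(x^*)$ for every $x^* \in I$, the single-coordinate update $r_{i+1}$ is simultaneously the mixed distribution of $(r_i, r)$ with respect to $\{x^*\}$, so $\KL(r_i, r) = \KL(r_i, r_{i+1}) + \KL(r_{i+1}, r)$. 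With this choice the monotone quantity \emph{is} the divergence to the limit, so once one argues it cannot stall above zero, convergence to $r$ is immediate and no separate limit-identification or coordinate-oscillation argument is needed. You can close your gap with a one-line change: replace $q$ by $r$ in your application of Theorem~\ref{theorem:pythagoras}, using your (correct) invariant $\eta_{r^{(t)}} = \eta_p$ on $S^+ \setminus I$ to see that $r$ is reachable from every iterate. That said, your closing worry is not idle: the paper's own final sentence (``there always exists $r_j$ with $j > i$ such that $\KL(r_i, r_j) > 0$ if $\KL(r_i, r) > \epsilon$'') is itself an informal assertion that the decrease cannot stall, and neither proof addresses boundary behaviour of $\vec{\Sc}$ explicitly; so what you identify as the crux is a genuine gap in both arguments, merely one that the paper's choice of functional makes much smaller.
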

\begin{proof}
 Let $r_1, r_2, \dots$ be a sequence of distributions in which each $r_i$ is obtained by the $i$th run of the function {\scshape ComputeMixedSingle} in Algorithm~\ref{alg:mixedmulti}.
 From Theorem~\ref{theorem:pythagoras}, we have $\KL(r_i, r) = \KL(r_i, r_{i + 1}) + \KL(r_{i + 1}, r)$ for all $i$, hence $\KL(r_i, r) \ge \KL(r_{i + 1}, r)$ with the equality holding only if $\KL(r_i, r_{i + 1}) = 0$.
 Since there always exists $r_j$ with $j > i$ such that $\KL(r_i, r_j) > 0$ if $\KL(r_i, r) > \epsilon$ for any $\epsilon > 0$, Algorithm~\ref{alg:mixedmulti} converges to the mixed distribution $r$.
 \end{proof}
 Since the time complexity of computing $r(x^*)$ for each $x^* \in I$ is $O(h(x^*)|{\downarrow}x^*|^2)$, the overall time complexity of computing $r$ is $O(h\sum_{x^* \in I}h(x^*)|{\downarrow}x^*|^2) \le O(h|S|^3\sum_{x^* \in I}h(x^*))$, where $h$ is the number of iterations until convergence of $r$.

\begin{full}
\subsection{Metric Space}\label{subsec:distance}
The KL divergence decomposition induces a metric on the poset $S$.
First we assign a real-valued value for each element.
Define a mapping $v : S \to \R^+$ for $x \in S$ as
\begin{align}
 \label{eq:valdef}
 v(x) \defeq \log|S| - \KL(p, p_{{\uparrow}x}),
\end{align}
where $p_{{\uparrow}x}$ is the mixed distribution of $(p, p_0)$ with the uniform distribution $p_0$ with respect to ${\uparrow}x$.
Note that $v(\bot) = H(X)$.

The function $v$ is called \emph{subvaluation}\footnote{The definition in~\cite[Chapter~10.3]{Deza09} applies to lattices, but in our case it can be generalized to posets by naturally setting $v(x \vee y) = v(\emptyset)$ if $x \vee y$ does not exist. See also~\cite{Monjardet81,Orum09} for general cases.}~\cite[Chapter~10.3]{Deza09} if it is strictly isotone, i.e., $x < y$ implies $v(x) < v(y)$, and satisfies the condition:
\begin{align}
 \label{eq:subval}
 v(x \vee y) + v(x \land y) \le v(x) + v(y).
\end{align}
If $v$ is subvaluation, it induces a metric $d_v$ on $S$ defined as
\begin{align*}
 d_v(x, y) \defeq 2v(x \vee y) - v(x) - v(y).
\end{align*}

To show that the mapping $v$ satisfies the above condition~\eqref{eq:subval}, we prepare the following lemma.
\begin{lemma}
 \label{lemma:equality}
 Given two distributions $p, q$, let $p_I$ be the mixed distribution of $(p, q)$ with respect to $I \subseteq S^+$.
 We have
 \begin{align*}
  \KL(p_I, p_{I \cup J}) = \KL(p_{I \cup K}, p_{I \cup J \cup K}),
 \end{align*}
 if $K \cap (J \setminus I) = \emptyset$.
 \end{lemma}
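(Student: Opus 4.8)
The plan is to turn the asserted equality into a statement about \emph{increments} of the divergence to the common distribution $q$, and then to extract that increment invariance from the orthogonality of the two coordinate blocks indexed by $J\setminus I$ and $K$. First I would observe that, because $I\subseteq I\cup J$, the distribution $p_{I\cup J}$ is itself the mixed distribution of $(p_I,q)$ with respect to $I\cup J$: on $I\cup J$ it carries the $\theta$-coordinates of $q$, and on $S^+\setminus(I\cup J)\subseteq S^+\setminus I$ it carries $\eta_{p_I}=\eta_p$. Theorem~\ref{theorem:pythagoras} then gives $\KL(p_I,q)=\KL(p_I,p_{I\cup J})+\KL(p_{I\cup J},q)$, hence $\KL(p_I,p_{I\cup J})=\KL(p_I,q)-\KL(p_{I\cup J},q)$. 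Writing $H(A):=\KL(p_A,q)$ and repeating this on the right-hand side, the lemma is equivalent to the modular identity
\begin{align*}
 H(I)+H(I\cup J\cup K)=H(I\cup J)+H(I\cup K),
\end{align*}
i.e.\ to the vanishing of the mixed second difference of $H$ over the square $\{I,\,I\cup J,\,I\cup K,\,I\cup J\cup K\}$ whose meet is $I$ and join $I\cup J\cup K$ (this is where the hypothesis $K\cap(J\setminus I)=\emptyset$ enters, guaranteeing $(I\cup J)\cap(I\cup K)=I$).

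Next I would put $H$ into dual coordinates. Using the log-linear expansion~\eqref{eq:loglinear} and the definition~\eqref{eq:eta}, the divergence to the fixed $q$ has the bilinear form $H(A)=\psi(\theta_q)+\varphi(\eta_{p_A})-\sum_{s\in S^+}\theta_q(s)\,\eta_{p_A}(s)$, where $\psi(\theta_q)=-\log q(\bot)$ is a constant and $\varphi$ is the negative-entropy potential dual to $\psi$. Since $\theta_{p_A}=\theta_q$ on $A$ and $\eta_{p_A}=\eta_p$ on $S^+\setminus A$, this collapses to
\begin{align*}
 H(A)=\text{const}+\log p_A(\bot)+\sum_{s\in S^+\setminus A}\bigl(\theta_{p_A}(s)-\theta_q(s)\bigr)\,\eta_p(s),
\end{align*}
so the remaining task is to show that this functional is additively separable across the blocks $J\setminus I$ and $K$.

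The heart of the argument is an induction on $|K|$ that reduces the claim to adjoining a single element $u\notin I\cup(J\setminus I)$ to the $\theta$-clamped set, followed by an appeal to the orthogonality~\eqref{eq:orthogonal}. Clamping the coordinate $u$ to $\theta_q(u)$ is a move of the mixed coordinate system along the $u$-direction, which is $g$-orthogonal to every direction indexed by $J\setminus I$; the disjointness hypothesis ensures $u$ is never one of those directions. The goal is to conclude that the solved-for free coordinates $\theta_{p_A}(\cdot)$ on $J\setminus I$, and the induced $\log p_A(\bot)$, respond to the $J\setminus I$-clamping in the same way whether or not $u$ has already been clamped, so that the $J\setminus I$-contribution to $H$ is unchanged and the mixed second difference is zero. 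I would carry this out by tracking the free coordinates explicitly through the recursions~\eqref{eq:theta_estimation} and~\eqref{eq:eta}.

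The step I expect to be the main obstacle is upgrading the \emph{pointwise} orthogonality of~\eqref{eq:orthogonal} — a statement about the Fisher metric at one point — to a genuinely \emph{finite} decoupling of the two blocks. Pythagoras alone yields only $\KL(p_I,p_{I\cup J})+\KL(p_{I\cup J},p_{I\cup J\cup K})=\KL(p_I,p_{I\cup K})+\KL(p_{I\cup K},p_{I\cup J\cup K})$ (the two telescopings of $\KL(p_I,p_{I\cup J\cup K})$ via~\eqref{eq:hdocmp}), which balances \emph{adjacent} edges of the square but does not by itself equate \emph{opposite} edges; the extra input must therefore come from the Legendre/Bregman structure rather than from orthogonality applied at a single point. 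Accordingly, the delicate part of the proof will be verifying that, under $K\cap(J\setminus I)=\emptyset$, the block of free $\theta_{p_A}$-values over $J\setminus I$ enters $H$ separably from the $K$-block along the entire geodesic, not merely to first order; I would treat this as the crux and discharge it by the explicit coordinate solve above.
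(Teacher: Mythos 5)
Your reduction is correct and cleanly executed: $p_{I\cup J}$ is indeed the mixed distribution of $(p_I,q)$ with respect to $I\cup J$, so Theorem~\ref{theorem:pythagoras} converts the lemma into the modular identity $H(I)+H(I\cup J\cup K)=H(I\cup J)+H(I\cup K)$ for $H(A)=\KL(p_A,q)$, and your Legendre expression for $H(A)$ is also right. You have likewise correctly diagnosed that the telescopings from~\eqref{eq:hdocmp} only balance \emph{adjacent} edges of the square, so the real content is a finite block-separability statement. But that is exactly where the proposal stops. The crux --- that the change in the coordinates induced by clamping the $\theta$'s on $J\setminus I$ is the same whether or not $K$ has already been clamped --- is announced as something you ``would'' discharge ``by the explicit coordinate solve above,'' yet no such solve appears; the induction on $|K|$ and the appeal to~\eqref{eq:orthogonal} are a plan, not an argument. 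As you yourself observe, \eqref{eq:orthogonal} is an infinitesimal (Fisher-metric) statement and cannot by itself deliver the finite decoupling you need. So the essential content of the lemma is left unproven, and the elaborate reduction does not bring you closer to it: the separability of $H$ across the blocks $J\setminus I$ and $K$ is the lemma in different clothes.

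For comparison, the paper proves precisely this missing claim directly and then needs no modular-identity detour: it observes that the increment $\theta_{p_I}(x)-\theta_{p_{I\cup J}}(x)$ depends only on the elements of $J\setminus I$, so under $K\cap(J\setminus I)=\emptyset$ it equals $\theta_{p_{I\cup K}}(x)-\theta_{p_{I\cup J\cup K}}(x)$; by the expansion~\eqref{eq:loglinear} the ratios $p_I(x)/p_{I\cup J}(x)$ and $p_{I\cup K}(x)/p_{I\cup J\cup K}(x)$ then agree pointwise, and the two KL divergences coincide (to match the weights in the two sums one also uses that $\eta_{p_I}$ and $\eta_{p_{I\cup K}}$ agree on $J\setminus I\subseteq S^+\setminus(I\cup K)$). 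If you want to complete your route, you must first establish this $\theta$-increment invariance; once it is in hand, both your separability of $H$ and the paper's shorter pointwise argument follow immediately, so the honest conclusion is that your proposal identifies the right obstacle but does not overcome it.
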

 \begin{proof}
  The difference of $\theta_{p_I}(x)$ and $\theta_{p_{I \cup J}}(x)$ depends on only elements in $J \setminus I$.
  Thus if $K \cap (J \setminus I) = \emptyset$ holds, the difference of $\theta_{p_{I \cup K}}$ and $\theta_{p_{I \cup J \cup K}}$ must be the same as that of $\theta_{p_I}$ and $\theta_{p_{I \cup J}}$.
  Thereby
 \begin{align*}
   \theta_{p_I}(x) - \theta_{p_{I \cup J}}(x) = \theta_{p_{I \cup K}}(x) - \theta_{p_{I \cup J \cup K}}(x).
 \end{align*}
  From Equation~\eqref{eq:loglinear},
  \begin{align*}
   \frac{p_I(x)}{p_{I \cup J}(x)} = \frac{p_{I \cup K}(x)}{p_{I \cup J \cup K}(x)},\ 
   \log \frac{p_I(x)}{p_{I \cup J}(x)} = \log \frac{p_{I \cup K}(x)}{p_{I \cup J \cup K}(x)}.
  \end{align*}
  Therefore $\KL(p_I, p_{I \cup J}) = \KL(p_{I \cup K}, p_{I \cup J \cup K})$ follows.
 \end{proof}

\begin{theorem}[subvaluation]
 The function $v$ in Equation~\eqref{eq:valdef} is subvaluation, that is, it is strictly isotone and it satisfies
 \begin{align*}
  v(x \vee y) + v(x \land y) \le v(x) + v(y).
 \end{align*}
 for all $x, y \in S$.
\end{theorem}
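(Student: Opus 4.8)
The plan is to work throughout with the map $\phi_I \defeq \KL(p, p_I)$, where $p_I$ is the mixed distribution of $(p, p_0)$ with respect to $I \subseteq S^+$, so that $v(x) = \log|S| - \phi_{{\uparrow}x}$ for every $x \in S$. Since $\log|S|$ is a constant, both assertions about $v$ become assertions about $\phi$: strict isotonicity of $v$ is equivalent to $\phi_{{\uparrow}x} > \phi_{{\uparrow}y}$ whenever $x < y$, and the subvaluation inequality is equivalent to the reversed inequality
\[
 \phi_{{\uparrow}x} + \phi_{{\uparrow}y} \le \phi_{{\uparrow}(x \vee y)} + \phi_{{\uparrow}(x \wedge y)}.
\]
I would first establish two facts about $\phi$ on the Boolean lattice of subsets of $S^+$ — monotonicity and modularity — and then transport them through the order-theoretic behaviour of the principal-filter map ${\uparrow}(\cdot)$.

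For monotonicity, observe that for $J \subseteq I$ the distribution $p_J$ is exactly the mixed distribution of $(p, p_I)$ with respect to $J$, since its $\theta$-coordinates vanish on $J$ and its $\eta$-coordinates agree with $p$ off $J$, matching both defining constraints. Theorem~\ref{theorem:pythagoras} then gives $\phi_I = \phi_J + \KL(p_J, p_I)$, hence $\phi_J \le \phi_I$, with strict inequality whenever $p_J \neq p_I$. Applying this with $J = {\uparrow}y \subsetneq {\uparrow}x = I$ for $x < y$ (note $x \in {\uparrow}x \setminus {\uparrow}y$) yields the strict isotonicity of $v$.

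For modularity I claim $\phi_A + \phi_B = \phi_{A \cap B} + \phi_{A \cup B}$ for all $A, B \subseteq S^+$. Using the monotonicity identity twice, $\phi_A = \phi_{A \cap B} + \KL(p_{A \cap B}, p_A)$ and $\phi_{A \cup B} = \phi_B + \KL(p_B, p_{A \cup B})$, the claim reduces to $\KL(p_{A \cap B}, p_A) = \KL(p_B, p_{A \cup B})$. This is precisely Lemma~\ref{lemma:equality} applied with base set $A \cap B$, increment $A \setminus B$, and shift $K = B \setminus A$; the disjointness $(B \setminus A) \cap (A \setminus B) = \emptyset$ verifies its hypothesis, and the two set identities $(A \cap B) \cup (B \setminus A) = B$ and $A \cup (B \setminus A) = A \cup B$ identify the endpoints.

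Finally I would combine the two facts with the order-theoretic identities ${\uparrow}x \cap {\uparrow}y = {\uparrow}(x \vee y)$ and ${\uparrow}x \cup {\uparrow}y \subseteq {\uparrow}(x \wedge y)$. Modularity gives $\phi_{{\uparrow}x} + \phi_{{\uparrow}y} = \phi_{{\uparrow}(x \vee y)} + \phi_{{\uparrow}x \cup {\uparrow}y}$, and monotonicity upgrades the second term to $\phi_{{\uparrow}x \cup {\uparrow}y} \le \phi_{{\uparrow}(x \wedge y)}$, which together yield the required inequality. The main obstacle — and the only source of genuine inequality rather than equality — is exactly this last inclusion: an upper bound of $x \wedge y$ need lie above neither $x$ nor $y$, so ${\uparrow}(x \wedge y)$ may strictly contain ${\uparrow}x \cup {\uparrow}y$, whereas the dual identity for the join is exact and contributes no slack. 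I would finally invoke the convention of the footnote to cover those pairs $x, y$ whose join or meet does not exist in the poset.
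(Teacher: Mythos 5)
Your proof is correct and is essentially the paper's own argument repackaged: your modularity identity $\phi_A+\phi_B=\phi_{A\cap B}+\phi_{A\cup B}$ is exactly the combination of Theorem~\ref{theorem:pythagoras} and Lemma~\ref{lemma:equality} that the paper chains together inline, and the nonnegative slack $\KL(p_{{\uparrow}x\cup{\uparrow}y},\,p_{{\uparrow}(x\wedge y)})$ you isolate via the inclusion ${\uparrow}x\cup{\uparrow}y\subseteq{\uparrow}(x\wedge y)$ is the same remainder the paper's computation ends with. The only difference is presentational: you state monotonicity and modularity of $I\mapsto\KL(p,p_I)$ as standalone facts on the subset lattice and then transport them through the principal-filter map, rather than carrying out the KL-divergence cancellations term by term.
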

\begin{proof}
 It is clear that $v$ is strictly isotone from Equation~\eqref{eq:valdef}.
 Since $v(x \vee y) + v(x \land y) = v(x) + v(y)$ if $x \le y$ or $y \le x$,
 in the following we assume that $x \not\le y$ and $y \not\le x$.
 From KL divergence decomposition in Theorem~\ref{theorem:pythagoras},
 \begin{align*}
     &  v(x) + v(y) - \left(\,v(x \vee y) + v(x \land y)\,\right)\\
  =\,&  \KL(p, p_{{\uparrow}(x \vee y)}) + \KL(p, p_{{\uparrow}(x \land y)})\\
     &- \KL(p, p_{{\uparrow}x}) - \KL(p, p_{{\uparrow}y})\\
  =\,&  \KL(p, p_{{\uparrow}(x \vee y)}) + \KL(p, p_{{\uparrow}x}) + \KL(p_{{\uparrow}x}, p_{{\uparrow}(x \land y)})\\
     &- \KL(p, p_{{\uparrow}x}) - \KL(p, p_{{\uparrow}(x \vee y)}) - \KL(p_{{\uparrow}(x \vee y)}, p_{{\uparrow}y})\\
  =\,&  \KL(p_{{\uparrow}x}, p_{{\uparrow}(x \land y)}) - \KL(p_{{\uparrow}(x \vee y)}, p_{{\uparrow}y})\\
  =\,&  \KL(p_{{\uparrow}x}, p_{{\uparrow}x \cup {\uparrow}y}) + \KL(p_{{\uparrow}x \cup {\uparrow}y}, p_{{\uparrow}(x \land y)}) - \KL(p_{{\uparrow}(x \vee y)}, p_{{\uparrow}y}).
 \end{align*}
 From Lemma~\ref{lemma:equality},
 \begin{align*}
  \KL(p_{{\uparrow}(x \vee y)}, p_{{\uparrow}y}) = \KL(p_{{\uparrow}x}, p_{{\uparrow}x \cup {\uparrow}y})
 \end{align*}
 since ${\uparrow}x \cap ({\uparrow}y \setminus {\uparrow}(x \vee y)) = \emptyset$.
 Thus we have
 \begin{align*}
  v(x) + v(y) - \left(\,v(x \vee y) + v(x \land y)\,\right) &= \KL(p_{{\uparrow}x \cup {\uparrow}y}, p_{{\uparrow}(x \land y)})\\
  &\ge 0.
 \end{align*}
 It follows that $v$ is subvaluation.
 \end{proof}
From our definition of $v$ and KL divergence decomposition, the distance is given as:
\begin{align*}
 d_v(x, y) &= 2v(x \vee y) - v(x) - v(y)\\
 &= \KL(p, p_{{\uparrow}x}) + \KL(p, p_{{\uparrow}y}) - 2\KL(p, p_{{\uparrow}(x \vee y)})\\
 &= \KL(p_{{\uparrow}(x \vee y)}, p_{{\uparrow}x}) + \KL(p_{{\uparrow}(x \vee y)}, p_{{\uparrow}y}).
\end{align*}
In particular, $d_v(x, y) = \KL(p_{{\uparrow}x}, p_{{\uparrow}y})$ if $x \ge y$.

If we consider the covering graph $G = (V, E)$ of $S$, which is the undirected graph such that $V = S$ and $(x, y) \in E$ if $x$ covers $y$ or $y$ covers $x$,
the KL divergence $\KL(p_{{\uparrow}x}, p_{{\uparrow}y})$ gives the weight for each edge $(x, y)$ and the distance $d(x, y)$ corresponds to the path length from $x$ to $y$ on $G$.
\end{full}

\subsection{Measuring Statistical Significance of $\theta$}\label{subsec:significance}
Given a distribution $p$ on $S$,
we can assess the statistical significance of $\theta_p$ through a likelihood-ratio test, in particular a $G$-test, using decomposition of the KL divergence.
Each $\theta_p(x)$ shows a contribution of $x$ on $p$ as it is the coefficient of the log expansion of $p$ and is orthogonal to the marginals $\eta_p$.

The null and the alternative hypotheses are~\cite{Nakahara02,Nakahara03}:
\begin{align*}
 H_0\!\colon \theta_p(x) = 0, \forall x \in I, \q H_1\!\colon \theta_p(x) \not= 0, \forall x \in I,
\end{align*}
which means that we \emph{knock down} all elements $x \in I$ by letting $\theta_p(x) = 0$ in the generalized log-linear model $\log p(x) = \sum_{s \le x} \theta_p(s)$ in Equation~\eqref{eq:loglinear}.
The statistics $\lambda$ is then given as
\begin{align*}
 \lambda = 2N \sum_{s \in S} \left(\,p(s) \log \left(\,\frac{p(s)}{r(s)}\,\right)\,\right)
 = 2N\KL(p, r),
\end{align*}
where $N$ is the sample size and $r$ is the null distribution, the mixed distribution of $(p, p_0)$ with respect to $I$, and hence $\lambda$ can be computed by Algorithms~\ref{alg:mixedsingle} and~\ref{alg:mixedmulti}.
Therefore, the \emph{$p$-value} can be obtained from data samples since $\lambda$ is known to follow the $\chi^2$-distribution with the degrees of freedom $|S| - 1$.

\begin{full}
\section{Mutual Information Decomposition}\label{sec:MIdecompose}
Let us introduce an additional discrete random variable $Y$ with an arbitrary alphabet $S'$ and consider the mutual information $\MI(X, Y)$ between $X$ and $Y$,
where we do not assume any structure for $S'$.
We show that the mutual information $\MI(X, Y)$ can be also decomposed using the structure of the poset $(S, \le)$ by following the approach proposed in~\cite{Nakahara02}.

Suppose that $p|_y$ is the conditional probability distribution of $p$ given the occurrence of the value $y \in S'$ of $Y$.
It is known that the mutual information $\MI(X, Y)$ can be described using the KL divergence:
\begin{align}
 \label{eq:MIKL}
 \MI(X, Y) &= \sum_{y \in S'} p_Y(y) \KL(p|_y, p) \nonumber\\
 &= \mathbb{E}_{Y}\left[\KL(p|_y, p)\right],
\end{align}
where $p_Y(y) = \Pr(Y = y)$ and $\mathbb{E}_Y$ denotes the expectation of $Y$.
Using KL divergence decomposition (Theorem~\ref{theorem:pythagoras}), for any subset $I \subseteq S$ we can decompose the KL divergence
\begin{align}
 \label{eq:MIKLdecompose}
 \KL(p|_y, p) = \KL(p|_y, p|_{yI}) + \KL(p|_{yI}, p),
 \end{align}
where $p|_{yI}$ is the mixed distribution of $(p|_y, p)$ with respect to $I$, that is,
\begin{align*}
 \left\{
 \begin{array}{ll}
  \eta_{p|_{yI}}(x) \defeq \eta_{p|_y}(x) & \text{if } x \in S^+ \setminus I, \\
  \theta_{p|_{yI}}(x) \defeq \theta_{p}(x) & \text{if } x \in I.
 \end{array}
 \right.
\end{align*}
Thus it is clear from Equations~\eqref{eq:MIKL} and~\eqref{eq:MIKLdecompose} that, for the mutual information $\MI(X, Y)$ and a subset $I \subseteq S$, we have
\begin{align}
 \MI(X, Y) = \mathbb{E}_{Y}\left[\KL(p|_y, p|_{yI})\right] + \mathbb{E}_{Y}\left[\KL(p|_{yI}, p)\right].
\end{align}

We formalize this observation that the mutual information can be \emph{refined} using the information of the structure of $S$ with introducing a new notion of the mutual information on the poset $S$.
\begin{definition}[Refined mutual information]
 \label{def:RMI}
 Given a discrete random variable $X$ with the alphabet $(S, \le)$ and a discrete random variable $Y$ with $S'$.
 For $I, J \subseteq S^+$, define the \emph{refined mutual information} $\RI{X}{Y}{I}{J}$ as
 \begin{align}
  \label{eq:RMI}
  \RI{X}{Y}{I}{J} &\defeq \sum_{y \in S'} p_Y(y) \KL(p|_{yI}, p|_{yJ})\\
  &= \mathbb{E}_{Y}\left[\KL(p|_{yI}, p|_{yJ})\right],
 \end{align}
 where $p|_y$ is the conditional probability of $p$ given $y \in S'$ and $p|_{yI}$ is the mixed distribution of $(p|_y, p)$ with respect to $I$.
\end{definition}
Because the refined mutual information $\RI{X}{Y}{I}{J}$ is the expectation of the KL divergence $\KL(p|_{yI}, p|_{yJ})$, it has the following useful properties:
\begin{enumerate}
 \item $\RI{X}{Y}{I}{J} \ge 0$.
 \item $\RI{X}{Y}{I}{J} = 0$ if $I = J$.
 \item $\RI{X}{Y}{\emptyset}{S^+} = \MI(X, Y)$.
\end{enumerate}
Note that $\RI{X}{Y}{I}{J} \not= \RI{X}{Y}{J}{I}$ in general as the KL divergence $\KL(p|_{yI}, p|_{yJ})$ is not symmetry.

Our main result is nonnegative decomposition of the refined mutual information.
\begin{theorem}[Refined mutual information decomposition]
 \label{theorem:MIKLdecompose}
 For any subsets $I, J, K \subseteq S^+$ satisfying $I \subseteq J \subseteq K$, we have
 \begin{align}
  \RI{X}{Y}{I}{K} = \RI{X}{Y}{I}{J} + \RI{X}{Y}{J}{K}.
 \end{align}
\end{theorem}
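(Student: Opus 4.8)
The plan is to reduce the claim to a per-outcome KL decomposition and then invoke the Pythagoras theorem (Theorem~\ref{theorem:pythagoras}). Since $\RI{X}{Y}{I}{J} = \mathbb{E}_Y[\KL(p|_{yI}, p|_{yJ})]$ and the asserted identity is linear in its three terms, by linearity of expectation it suffices to establish, for each fixed $y \in S'$, the pointwise identity
\begin{align*}
 \KL(p|_{yI}, p|_{yK}) = \KL(p|_{yI}, p|_{yJ}) + \KL(p|_{yJ}, p|_{yK}).
\end{align*}
Weighting by $p_Y(y)$ and summing over $y \in S'$ then yields the theorem.

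First I would fix $y$ and abbreviate $r_I = p|_{yI}$, $r_J = p|_{yJ}$, $r_K = p|_{yK}$, each being the mixed distribution of $(p|_y, p)$ with respect to $I$, $J$, $K$ respectively. The central step is to show that $r_J$ is itself the mixed distribution of the pair $(r_I, r_K)$ with respect to $J$. To verify this I would check the two defining coordinate conditions. On $S^+ \setminus J$, since $I \subseteq J$ we have $S^+ \setminus J \subseteq S^+ \setminus I$, so $\eta_{r_I}$ agrees with $\eta_{p|_y}$ there, which is exactly $\eta_{r_J}$; hence $r_J$ inherits its $\eta$-coordinates off $J$ from $r_I$. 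On $J$, since $J \subseteq K$, the coordinate $\theta_{r_K}$ equals $\theta_p$ there, which is exactly $\theta_{r_J}$; hence $r_J$ inherits its $\theta$-coordinates on $J$ from $r_K$. These are precisely the conditions defining the mixed distribution of $(r_I, r_K)$ with respect to $J$, and by the uniqueness of mixed distributions (guaranteed by the orthogonality in Equation~\eqref{eq:orthogonal}) this identification is unambiguous.

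Once this is in place, I would apply Theorem~\ref{theorem:pythagoras} with the substitution $p \mapsto r_I$, $q \mapsto r_K$, and mixing set $J$. Since $r_J$ is the mixed distribution of $(r_I, r_K)$ with respect to $J$, the theorem immediately delivers $\KL(r_I, r_K) = \KL(r_I, r_J) + \KL(r_J, r_K)$, which is the sought per-$y$ identity. Taking the expectation over $Y$ then completes the argument.

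The main obstacle is the coordinate bookkeeping in the central step: one has to track, on each of the blocks $I$, $J \setminus I$, $K \setminus J$, and $S^+ \setminus K$, which of $\eta$ and $\theta$ each of $r_I$, $r_J$, $r_K$ borrows from $p|_y$ and from $p$. The nesting $I \subseteq J \subseteq K$ is exactly what forces the two agreement conditions to hold simultaneously, and it is the only place where the hypothesis is used. No analytic estimates are required; the work is purely in matching coordinates so that the clean instance of Theorem~\ref{theorem:pythagoras} applies.
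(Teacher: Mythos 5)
Your proof is correct and follows the same route as the paper's: reduce the claim to a per-$y$ KL identity by linearity of expectation, then invoke the Pythagorean theorem (Theorem~\ref{theorem:pythagoras}). In fact you make explicit the one step the paper's one-line proof (``directly follows from Equations~\eqref{eq:MIKL} and~\eqref{eq:MIKLdecompose}'') glosses over, namely the coordinate check that $p|_{yJ}$ is the mixed distribution of $(p|_{yI}, p|_{yK})$ with respect to $J$ under the nesting $I \subseteq J \subseteq K$ --- the cited Equation~\eqref{eq:MIKLdecompose} only covers the special case $I = \emptyset$, $K = S^+$.
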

\begin{proof}
 It directly follows from Equations~\eqref{eq:MIKL} and~\eqref{eq:MIKLdecompose}.
\end{proof}
\begin{corollary}[Mutual information decomposition]
 For any subset $I \subseteq S^+$, we have
 \begin{align}
  \MI(X, Y) = \RI{X}{Y}{\emptyset}{I} + \RI{X}{Y}{I}{S^+}.
 \end{align}
 Moreover, for a hierarchical collection $\{I_0, I_1, \dots, I_k\}$ such that $\emptyset = I_0 \subseteq I_1 \subseteq \dots \subseteq I_{k} = S^+$, we have
 \begin{align}
  \MI(X, Y) = \sum_{i = 1}^k \RI{X}{Y}{I_{i - 1}}{I_i}.
 \end{align}
 \end{corollary}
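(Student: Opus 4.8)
The plan is to obtain both identities as immediate formal consequences of the additive decomposition just established (Theorem~\ref{theorem:MIKLdecompose}), together with the third listed property of the refined mutual information, namely $\RI{X}{Y}{\emptyset}{S^+} = \MI(X, Y)$. No fresh computation with the underlying KL divergences is needed, since all of the analytic content already resides in Theorem~\ref{theorem:MIKLdecompose}.

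For the first identity, I would instantiate Theorem~\ref{theorem:MIKLdecompose} at the chain $\emptyset \subseteq I \subseteq S^+$, which is admissible for every $I \subseteq S^+$ because $\emptyset$ is contained in each subset and each subset is contained in $S^+$. This yields $\RI{X}{Y}{\emptyset}{S^+} = \RI{X}{Y}{\emptyset}{I} + \RI{X}{Y}{I}{S^+}$, and substituting $\MI(X, Y)$ for the left-hand side via the property above gives the claim.

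For the second identity I would prove, by telescoping, the slightly more general statement that $\RI{X}{Y}{I_0}{I_k} = \sum_{i=1}^k \RI{X}{Y}{I_{i-1}}{I_i}$ for any chain $I_0 \subseteq I_1 \subseteq \dots \subseteq I_k$ in $S^+$, and then specialize to $I_0 = \emptyset$, $I_k = S^+$. The telescoping is a routine induction on $k$: the base case $k = 1$ is trivial, and the inductive step applies Theorem~\ref{theorem:MIKLdecompose} to the triple $I_0 \subseteq I_{k-1} \subseteq I_k$ to split off the final summand $\RI{X}{Y}{I_{k-1}}{I_k}$, after which the induction hypothesis on the shorter chain $I_0 \subseteq \dots \subseteq I_{k-1}$ supplies the remaining terms. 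Once this identity holds for the chain with endpoints $\emptyset$ and $S^+$, the head $\RI{X}{Y}{I_0}{I_k} = \RI{X}{Y}{\emptyset}{S^+}$ equals $\MI(X, Y)$ again by the third property, completing the argument.

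I do not expect a genuine obstacle, as the corollary is purely a bookkeeping consequence of additivity. The only point needing a moment's attention is checking that the containment hypothesis $I \subseteq J \subseteq K$ of Theorem~\ref{theorem:MIKLdecompose} is satisfied at each invocation; this is immediate from the monotonicity of the chain and from the fact that $\emptyset$ and $S^+$ are the least and greatest subsets in play.
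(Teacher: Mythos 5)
Your proposal is correct and matches the paper's (implicit) argument: the corollary is stated without a separate proof precisely because it follows by instantiating Theorem~\ref{theorem:MIKLdecompose} at the chain $\emptyset \subseteq I \subseteq S^+$ (and telescoping along the hierarchical collection), then identifying $\RI{X}{Y}{\emptyset}{S^+}$ with $\MI(X,Y)$ via the third listed property. Your added care about the induction and the containment hypotheses is sound but does not change the route.
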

These results show that the structure of $S$ makes it possible to refine and decompose the mutual information $\MI(X, Y)$.
We can use Algorithms~\ref{alg:mixedsingle} and~\ref{alg:mixedmulti} to compute $p|_{yI}$ by letting $p = p|_y$, $q = p$, and $r = p|_{yI}$.

Since $\MI(X, Y)$ intuitively means the strength of the association between $X$ and $Y$, the refined value $\RI{X}{Y}{I}{J}$ shows the contribution of $J \setminus I$. 
In particular, we can isolate any subset $I \subseteq S$ from the total mutual information $\MI(X, Y)$ and measure the degree of association of $I$ by the refined mutual information $\RI{X}{Y}{\emptyset}{I}$, in which $\RI{X}{Y}{\emptyset}{I} = 0$ holds if $\theta_{p|_{yI}}(x) = \theta_{p}(x)$ for all $x \in I$.

\begin{figure}[t]
 \centering
 \includegraphics{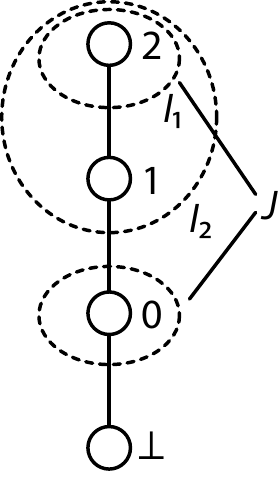}
 \caption{Poset $S$ in Example~\ref{example:RI}.}
 \label{figure:RI_eg}
\end{figure}

\begin{example}
 \label{example:RI}
 Let $S = \{0, 1, 2, 3\}$ and $S' = \{0, 1\}$ for random variables $X$ and $Y$, respectively, and assume a simple structure such that the order is given as $0 \le 1 \le 2 \le 3$ for $S$ as shown in Figure~\ref{figure:RI_eg}.
 Suppose that we have the following probabilities:
 \begin{center}
 \begin{tabular}[t]{cc|cccc|c}
  && \multicolumn{4}{c|}{$X$} & \multirow{2}{20pt}{Total}\\
  && $0$ & $1$ & $2$ & $3$  & \\
  \midrule
  \multirow{2}{0pt}{$Y$} & $0$ & $0.01$ & $0.30$ & $0.10$ & $0.02$ & $0.43$\\
  & $1$ & $0.10$ & $0.13$ & $0.14$ & $0.20$ & $0.57$\\
  \midrule
  \multicolumn{2}{c|}{Total} & $0.11$ & $0.43$ & $0.24$ & $0.22$ & $1.00$ \\
 \end{tabular}
 \end{center}
 The mutual information is as follows.
 \begin{align*}
  \MI(X, Y) &= H(X) + H(Y) - H(X, Y)\\
  &= 1.281 + 0.6833 - 1.808 = 0.1562.
 \end{align*}
 For simplicity, we denote $f = (f(0), f(1), f(2), f(3))$ to show each value $f(x)$ of a function $f$.
 Let $I = \{2, 3\}$.
 From $\theta_{p|_0}$, $\theta_{p|_1}$, $\theta_p$, $\eta_{p|_0}$, $\eta_{p|_1}$, and $\eta_p$,
 The probability distribution $p|_{yI}$ is computed as $p|_{0I} = (0.0233, 0.472, 0.263, 0.241)$ and $p|_{1I} = (0.175, 0.398, 0.222, 0.204)$.
 Thus we have
 \begin{align*}
  \MI(X, Y) &= \RI{X}{Y}{\emptyset}{I} + \RI{X}{Y}{I}{S^+}\\
  &= 0.1219 + 0.0343.
 \end{align*}
 Moreover, $\RI{X}{Y}{\emptyset}{\{1\}} = 0.0713$, $\RI{X}{Y}{\emptyset}{\{2\}} = 0.0252$, $\RI{X}{Y}{\emptyset}{\{3\}} = 0.0340$.
 Thus the element $1 \in S$ has the most strong association with $Y$, $3 \in S$ is the second, and $2 \in S$ is the third.
\end{example}
\end{full}

\begin{full}
\section{Structure Learning from Data}\label{sec:learning}
 We illustrate how to obtain the partial order structure of $S$ from datasets in three case studies, where a dataset is a set of binary vectors, natural vectors, or real vectors.
 In particular, the first case (Section~\ref{subsec:binary}) corresponds to analyzing interactions of $n$ events $e_1, e_2, \dots, e_n$, where each event is treated as a binary random variable,
 It also applies whenever we use dummy coding of categorical variables.

\subsection{Orthogonal Decomposition of Interactions}\label{subsec:binary}
\end{full}
\begin{ISIT}
\section{Orthogonal Decomposition of Interactions}\label{sec:learning}
\end{ISIT}
As a representative application, let us consider the problem of orthogonal decomposition of event combinations.
Suppose there are $n$ events $e_1, \dots, e_n$ as discussed in the Introduction.
For each subset $x \subseteq \sen = \{1, \dots, n\}$, let $p(x)$ be the probability of the combination $\bigcap_{i \in x} e_i$.
The objective is to decompose $\log p(x)$ to the sum of coefficients of its subsets $s \subseteq x$, which correspond to the $\theta$-coordinates:
$\log p(x) = \sum_{s \le x} \theta(s)$.
The order $\le$ is given according to the inclusion relationship: $x \le s$ if $x \subseteq s$.
The coefficients $\theta(s)$ show the ``pure'' contributions of respective interactions $\bigcap_{j \in s} e_j$ as they are independent of their frequencies; that is, the $\eta$-coordinates: $\eta(s) = \sum_{x \ge s} p(x)$


Assume that $N$ samples $t_1, t_2, \dots, t_N$ are given, where each sample $t_i$ is a set of events, which means that the events occur simultaneously.
We estimate each probability $p(x)$ through its natural estimator $\hat{p}(x) = |\{i \in [N] \mid t_i = x \}| \,/\, N$.
To effectively estimate $\hat{p}$ and efficiently compute $\theta_{\hat{p}}$ and $\eta_{\hat{p}}$ from samples,
we prune the whole event combinations $\Pc(\sen)$ by excluding combinations that do not frequently appear in the dataset.
Given a threshold $\sigma \in \R$ such that $0 \le \sigma \le 1$, we set
$S^+ = \{\,x \subseteq \sen \mid \hat{p}(x) \ge \sigma\,\}$
and
$\hat{p}(\bot) = 1 - \sum_{x \in S^+} \hat{p}(x)$.
Thus the dimensionality of the manifold $\vec{\Sc}$ reduces from $2^n$ to at most $N$.
Since any subset of $\Pc(\sen)$ is a poset, we can apply our decomposition technique presented in Section~\ref{sec:probPosets} via computation of $\theta_{\hat{p}}$, $\eta_{\hat{p}}$, and mixed distributions.
Interestingly, a sample $t_i$ can be viewed as a transaction of a database and $\eta_{\hat{p}}(I)$ corresponds to the \emph{support} of $I$ used in the context of frequent pattern (itemset) mining~\cite{Aggarwal14FPM}.


\begin{figure}
 \begin{minipage}[c]{.4\linewidth}
  \centering
  \captionof{table}{Samples in Example~\ref{example:eg1}.}
  \label{table:eg1}
  \begin{tabularx}{\linewidth}{l|l}
   \toprule
   & Events\\
   \midrule
   $t_1$ & $e_2$\\
   $t_2$ & $e_2$\\
   $t_3$ & $e_4$, $e_5$\\
   $t_4$ & $e_1$, $e_2$, $e_4$, $e_5$\\
   $t_5$ & $e_1$, $e_2$, $e_4$, $e_5$\\
   $t_6$ & $e_3$\\
   $t_7$ & $e_1$, $e_2$, $e_4$, $e_5$\\
   $t_8$ & $e_4$, $e_5$\\
   $t_9$ & $e_1$, $e_2$, $e_4$, $e_5$\\
   $t_{10}$ & $e_2$\\
   \bottomrule
  \end{tabularx}
 \end{minipage}\hfill
 \begin{minipage}[c]{.58\linewidth}
  \centering
  \includegraphics[width=.8\linewidth]{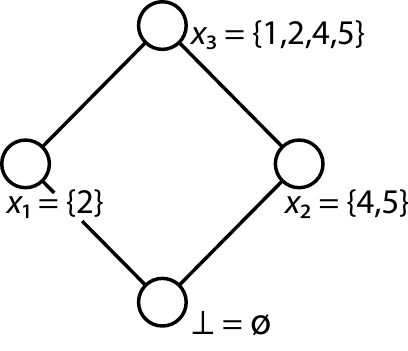}
  \caption{Poset generated from samples in Table~\ref{table:eg1}.}
  \label{figure:eg1}
 \end{minipage}
\end{figure}

\begin{example}
 \label{example:eg1}
 Given samples in Table~\ref{table:eg1}, assume that our threshold $\sigma = 0.2$.
 We then obtain a poset $S = \{\bot, x_1, x_2, x_3\}$ with $\bot = \emptyset$, $x_1 = \{2\}$, $x_2 = \{4, 5\}$, and $x_3 = \{1, 2, 4, 5\}$,
 as shown in Figure~\ref{figure:eg1}, where
 $\hat{p}(\bot) = 0.1$, $\hat{p}(x_1) = 0.3$, $\hat{p}(x_2) = 0.2$, and $\hat{p}(x_3) = 0.4$.
 Thus, $\theta_{\hat{p}}$ are obtained as follows:
 $\theta_{\hat{p}}(\bot) = -2.303$, $\theta_{\hat{p}}(x_1) = 1.099$, $\theta_{\hat{p}}(x_2) = 0.693$, and $\theta_{\hat{p}}(x_3) = -0.405$.
 Let $\hat{r}_x$ be the mixed distribution of $(\hat{p}, p_0)$ with $\{x\} \in S$.
 From these parameters, the KL divergence for each interaction is obtained as $\KL(\hat{p}, \hat{r}_{x_1}) = 0.0523$, $\KL(\hat{p}, \hat{r}_{x_2}) = 0.0170$, and $\KL(\hat{p}, \hat{r}_{x_3}) = 0.0040$.
 Although $p$-values of those interactions are larger than $0.99$, they are due to small sample size $N = 10$.
 If $N = 300$, for example, the $p$-value of $x_1$ becomes $0.001$ and it is significant under the significance level $\alpha = 0.05$.
 \end{example}

\begin{full}
\subsection{Nonnegative Integer Vectors}\label{subsec:natural}
\end{full}
The same strategy can be applied to a poset $S$ composed of vectors of $n$-dimensional nonnegative integers $\Z^n_{\ge 0}$.
We assume $S$ to be a subset of $\Z^n_{\ge 0}$, where for each pair of vectors $x, y \in \Z^n_{\ge 0}$ with $\vec{x} = (x^1, x^2, \dots, x^n)$ and $\vec{y} = (y^1, y^2, \dots, y^n)$,
we define the partial order $\le$ as $\vec{x} \le \vec{y}$ if and only if $x^i \le y^i$ for all $i \in \sen$, and $0 = (0, 0, \dots, 0) \in \Z^n_{\ge 0}$ corresponds to $\bot$.
Any subset $S \subset \Z^n_{\ge 0}$ becomes a poset.

Given $N$ data points $\vec{x}_1, \vec{x}_2, \dots, \vec{x}_N$ of $n$-dimensional nonnegative integers.
Similar to the previous case, a poset $S^+$ is obtained from data as
$S^+ = \{\,\vec{x} \in \Z^n_{\ge 0} \mid \hat{p}(\vec{x}) \ge \sigma\,\}$ using a threshold $\sigma \in \R$.
We can apply our information decomposition to $S$ with an empirical probability distribution $\hat{p}$.

\begin{example}
 \label{example:nat}
 Given data points $\vec{x}_1, \vec{x}_2, \dots, \vec{x}_{25} \in \Z^2_{\ge 0}$ as
 $\vec{x}_1, \dots, \vec{x}_3 = (0, 1)$, $\vec{x}_4 = (1, 0)$,
 $\vec{x}_5, \dots, \vec{x}_8 = (1, 1)$, $\vec{x}_9, \dots, \vec{x}_{11} = (1, 2)$,
 $\vec{x}_{12}, \dots, \vec{x}_{21} = (2, 1)$, $\vec{x}_{22}, \dots, \vec{x}_{25} = (3, 3)$.
 We have $S = \{(0, 0), (0, 1), (1, 1), (1, 2), (2, 1), (3, 3)\}$ if $\sigma = 2/25$, which is shown in Figure~\ref{figure:nat_eg}.
\end{example}

\begin{figure}[t]
 \centering
 \includegraphics[width=.47\linewidth]{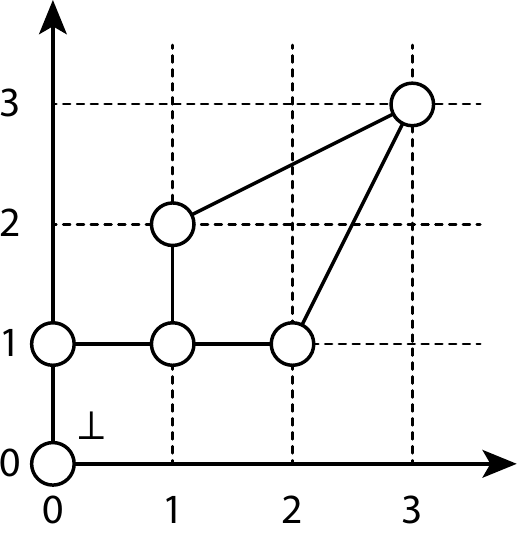}
 \caption{Poset of nonnegative integer vectors in Example~\ref{example:nat}.}
 \label{figure:nat_eg}
\end{figure}

\begin{full}
\subsection{Real Vectors}\label{subsec:real}
Since any subset $S$ of the set $\R^n$ of $n$-dimensional real vectors is a poset with the same partial order as that of $\Z^n_{\ge 0}$,
we can again apply our information decomposition to $S \subset \R^n$ as long as $S$ is finite and the least element $\bot \in S$.
However, given $N$ data points $\vec{x}_1, \vec{x}_2, \dots, \vec{x}_N$ of $n$-dimensional real vectors in $\R^n$,
the problem is how to estimate probabilities $\hat{p}(\vec{x}_i)$ as $\vec{x}_i \not= \vec{x}_j$ in general and the natural estimator is no longer effective for estimation.

A representative approach to solve the problem is partitioning/discretizing a given dataset beforehand by clustering.
Let a dataset $V = \{\vec{x}_1, \vec{x}_2, \dots, \vec{x}_N\}$ and $V_1, V_2, \dots, V_k$ be $k$ clusters of $V$ such that
$V_i \cap V_j = \emptyset$ for all $i, j \in [k]$ and $\bigcup_{i \in [k]} V_i  = V$.
We assume that each cluster $V_i$ has its representative point $\vec{c}_i \in \R^n$.
Then probability is estimated as $\hat{p}(\vec{c}_i) = |V_i| / N$ for each $i \in [k]$ and
\begin{align*}
 S^+ = \{\,\vec{c}_i \in \R^n \mid \hat{p}(\vec{c}_i) \ge \sigma\,\}
\end{align*}
with $\hat{p}(\bot) = 1 - \sum_{c \in S^+} \hat{p}(\vec{c})$.

We can use any clustering methods as long as they give a representative point for each cluster.
Examples include the $K$-means algorithm~\cite{Macqueen67}, where $\vec{c}_i$ is the center of the cluster $V_i$, and
the $K$-medoids algorithm~\cite{Kaufman90}, where $\vec{c}_i$ is a data point $\vec{x} \in V_i$ that minimizes the sum of distances $\sum_{\vec{y} \in V_i} D(\vec{x}, \vec{y})$.
See~\cite{Aggarwal13Clustering} for reviews of other clustering methods.
\end{full}

\section{Conclusion}\label{sec:conclusion}
In this paper, we have theoretically shown the intriguing relationship between two key structures in information geometry and order theory:
the dually flat structure of a manifold of the exponential family and the partial order structure of events.
We have proposed an efficient algorithm of information decomposition that is applicable to any kind of posets; this is in contrast to a number of other studies~\cite{Bertschinger13,Bertschinger14,Olbrich15,Williams10}.

As a representative application, we have demonstrated orthogonal decomposition of interactions of events.
We have shown that the partial order structure can be directly obtained from data in an efficient manner, and
the dimensionality of the manifold is reduced from $2^n$ for $n$ variables in previous approaches to, at most, the sample size $N$.
Thus, we can perform orthogonal decomposition for recently emerging high-dimensional data with thousands or even millions of variables, such as single nucleotide polymorphisms (SNPs) in genome-wide association studies (GWAS)~\cite{Wellcome07} and neural data in neuroscience~\cite{Alivisatos12}.
To our knowledge, this is the first method that avoids the curse of dimensionality in orthogonal decomposition of interactions and achieves efficient computation and effective probability estimation from data.

Our work promises many interesting future studies, both in theoretical and practical directions.
There will be a more interesting theoretical connection between information geometry and order theory.
Furthermore, it is exciting to apply our decomposition method to real-world scientific datasets such as firing patterns of neurons and SNPs in GWAS to reveal unknown associations.


%



\section*{Acknowledgment}
This work was supported by JSPS KAKENHI Grant Number 26880013 (MS) and 26120732 (HN).
The research of K.T.~was supported by JST CREST, JST ERATO, RIKEN PostK, NIMS MI2I, KAKENHI Nanostructure and KAKENHI 15H05711.



%

\bibliographystyle{IEEEtran}
\bibliography{main}

\end{document}